\newcommand{\C}{\mathbb{C}}
\renewcommand{\L}{\mathcal{L}}
\renewcommand{\x}{\mathbf{x}}
\def\({\left(}
\def\){\right)}
\def\[{\begin{eqnarray}}
\def\]{\end{eqnarray}}
\def\d{\partial}
\def\d{\partial}
\title{Integrability on generalized $q$-Toda equation and hierarchy}
\author{\footnotesize Anni Meng}
\address{Department of
Mathematics,  Ningbo University\\
Ningbo, 315211, Zhejiang, P.\ R.\ China}
\author{\footnotesize Chuanzhong Li\footnote{%
Corresponding author:lichuanzhong@nbu.edu.cn.} \ \ and  Shuo Huang}
\address{Department of
Mathematics,  Ningbo University\\
Ningbo, 315211, Zhejiang, P.\ R.\ China\\
\email{}}
\begin{document}

\maketitle
\thispagestyle{empty}

\vphantom{\vbox{%
\begin{history}
\received{(26 March 2014)}
\revised{(Day Month Year)}
\accepted{(12 May 2014)}
\end{history}
}}
\begin{abstract}
 In this paper, we construct a new integrable equation which is a generalization of $q$-Toda equation. Meanwhile its soliton solutions are constructed to show
 its integrable property. Further the Lax pairs of the generalized $q$-Toda equation and a whole integrable generalized $q$-Toda hierarchy are also constructed.
 To show the integrability, the Bi-hamiltonian structure and tau symmetry of the generalized $q$-Toda hierarchy are given and this leads to the tau function.
\end{abstract}


\keywords{generalized $q$-Toda lattice, soliton solutions, Lax equation, generalized $q$-Toda hierarchy, tau function.}
\ccode{2000 Mathematics Subject Classification: 37K05, 37K10, 37K20}

\section{Introduction}
The Toda lattice equation is a completely integrable system which has many important applications in mathematics and physics including the theory of Lie algebra representation, orthogonal polynomials and  random
matrix model  \cite{Toda,Todabook,UT,witten,dubrovin}. Toda system has many kinds of reduction or extension, for example extended Toda hierarchy (ETH)\cite{CDZ}, bigraded Toda hierarchy (BTH)\cite{C}-\cite{EQTH} and so on. These generalized Toda hierarchies have important application in Gromov-Witten theory on $\C P^1$ and orbiford.

The $q$-calculus ( also called quantum calculus)  traces
back to the early 20th century and attracted  important
works in the area of $q$-calculus\cite{Jackson,kac}  and $q$-hypergeometric series. The
$q$-deformation of  classical nonlinear integrable system  started in 1990's by means of
$q$-derivative $\partial_q $   instead of usual derivative with respect spatial variable  in the classical system.
Several $q$-deformed integrable systems have been presented, for example the
$q$-deformed Kadomtsev-Petviashvili ($q$-KP) hierarchy is a
subject of intensive study in the literatures \cite{mas}-\cite{myqckp}.
  The $q$-Toda equation was also studied in \cite{ZTAK,Silindir} but not for a whole hierarchy. This paper will be devoted to the further studies on a generalized  $q$-Toda equation(GQTE) and generalized  $q$-Toda hierarchy(GQTH).

To show the complete integrability of nonlinear evolution, it is necessary to test whether the equation has Hirota bilinear equation, three-soliton solution,  Lax pair, Bi-hamiltonian structure and even tau symmetry. This paper will show the integrability on the Generalized $q$-Toda hierarchy from the above several directions.

\section{$q$-difference operator and its generalization}
As we all know, in common sense an integrable equation can always be rewritten in form of a Hirota bilinear equation using Hirota direct method. Therefore firstly we introduce some basic notation including Hirota derivatives as a preparation for introducing the Hirota bilinear equation of the generalized  $q$-Toda equation.

Let $F$ be a space of differentiable functions $f,g: \mathbb{R}^{n}\rightarrow \mathbb{R}$.
The Hirota $D$-operator $D:F\times F\rightarrow F$ is defined as
\begin{equation}\label{equa3}
[D^{m_1}_{x}D^{m_2}_{t}...]f\cdot g=[(\partial_{x}-\partial_{x'})^{m_1}(\partial_{t}-\partial_{t'})^{m_2}...]f(x,t,...)g(x',t',...)|_{x'=x,t'=t,...}.
\end{equation}
Then one can find the following standard statement holds.
Let $P(D)$ be an arbitrary polynomial in D acting on two differentiable
functions $f(x,t,...)$ and $g(x,t,...)$,then the following equations hold
\begin{equation}\label{equa4}
\qquad P(D)f\cdot g=P(-D)g\cdot f,
\end{equation}
\begin{equation}\label{equa5}
\qquad P(D)f\cdot 1=P(\partial)f;\ \ P(D)1\cdot f=P(-\partial)f,
\end{equation}
where $\partial$ is the usual differential operator with respect to spatial variable $x$.
The virtue of exponential identity can appropriately be as following form in terms of the Hirota D-operator
\begin{equation}\label{equa6}
 e^{\epsilon D_{x}}f(x)g(x)=f(x+\epsilon)g(x-\epsilon).
\end{equation}
If $\epsilon$ is parameter and $f,g$ belong to continuously differentiable functions, like in \cite{Silindir}, then define
\begin{equation}\label{equa7}
\sigma_{\epsilon}(x)=e^{\epsilon\x(x)\partial_{x}}x.
\end{equation}
then
\begin{equation}\label{equa7'}
e^{\epsilon\x(x)\partial_{x}}u(x)=u(e^{\epsilon\x(x)\partial_{x}}x)=u(\sigma_{\epsilon}(x)),\ \ \epsilon>0.
\end{equation}
If $\sigma_{\epsilon}(u(x))=e^{\epsilon\partial_{x}}u(x)=u(x+\epsilon)$,
the system introduced later will lead to original Toda lattice.
If $\sigma_{\epsilon}(x)=e^{\epsilon x\partial_{x}}x=e^{\epsilon}x,$  which implies $e^{\epsilon x\partial_{x}}u(x)=u(e^{\epsilon}x).$
Then the system will lead to $q$-Toda lattice in \cite{Silindir}.
 Considering that the vector field of the form $\x(x)\partial_{x}=x^{n}\partial_{x}$ on $\mathbb{R}$, it will be the general generalized $q$-Toda lattice. In this paper, we only give the case $n=2$, and we just name the leading system later the generalized $q$-Toda equation.
\begin{proposition}\label{theo}
The $q$-exponential identity acts on arbitrary continuous differentiable functions $f(x),g(x)$ as the rule
\begin{equation}\label{equa9}
e^{\epsilon x^{2}D_{x}}f(x)g(x)=\Lambda_{\epsilon}f(x)\Lambda_{\epsilon}^{-1}g(x),x\in\mathbb{R}
\end{equation}
where the forward and backward shift operators are separately represented by $\Lambda_{\epsilon}$ and $\Lambda_{\epsilon}^{-1}$, respective acting as
\begin{equation}\label{equa10}
\Lambda_{\epsilon}f(x)=f(\frac{x}{1-x\epsilon}),\qquad \Lambda_{\epsilon}^{-1}g(x)=g(\frac{x}{1+x\epsilon}).
\end{equation}
\end{proposition}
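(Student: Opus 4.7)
The plan is to imitate the proof of the usual exponential identity \eqref{equa6}, but with the translation vector field $\partial_x$ replaced by the quadratic vector field $x^{2}\partial_x$. First I would fix the interpretation of the symbol $x^{2}D_x$: in analogy with \eqref{equa3}, it should be the operator
\begin{equation*}
x^{2}D_x \;=\; x^{2}\partial_x \;-\; (x')^{2}\partial_{x'}
\end{equation*}
acting on $f(x)g(x')$ with the restriction $x'=x$ imposed at the end. Since $x^{2}\partial_x$ and $(x')^{2}\partial_{x'}$ involve independent variables, they commute, so the exponential factorises as
\begin{equation*}
e^{\epsilon x^{2}D_x} \;=\; e^{\epsilon x^{2}\partial_x}\,\cdot\,e^{-\epsilon (x')^{2}\partial_{x'}}.
\end{equation*}

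Next I would compute the flow generated by $x^{2}\partial_x$. Using \eqref{equa7}, this amounts to solving the ODE $dy/ds = y^{2}$ with initial condition $y(0)=x$; the solution is the Möbius-like map $y(s)=x/(1-xs)$. Therefore, using \eqref{equa7'} with $\mathbf{x}(x)=x^{2}$,
\begin{equation*}
e^{\epsilon x^{2}\partial_x}f(x) \;=\; f\!\left(\tfrac{x}{1-x\epsilon}\right) \;=\; \Lambda_{\epsilon}f(x).
\end{equation*}
Replacing $\epsilon$ by $-\epsilon$ and acting on $g(x')$, one similarly gets $e^{-\epsilon (x')^{2}\partial_{x'}}g(x')=g(x'/(1+x'\epsilon))$. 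Evaluating at $x'=x$ recovers the advertised backward shift $\Lambda_{\epsilon}^{-1}g(x)$.

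Putting the two pieces together,
\begin{equation*}
e^{\epsilon x^{2}D_x}f(x)g(x) \;=\; e^{\epsilon x^{2}\partial_x}f(x)\,\cdot\, e^{-\epsilon (x')^{2}\partial_{x'}}g(x')\Big|_{x'=x} \;=\; \Lambda_{\epsilon}f(x)\,\Lambda_{\epsilon}^{-1}g(x),
\end{equation*}
which is precisely \eqref{equa9}. The main conceptual obstacle is the very first step, namely pinning down what $x^{2}D_x$ means in a way that is consistent both with the Hirota definition \eqref{equa3} and with the general vector-field convention \eqref{equa7}–\eqref{equa7'}; once this is granted, the two remaining ingredients (commutativity of the disjoint-variable vector fields and the explicit integration of the ODE $dy/ds=y^{2}$) are routine. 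A minor analytic caveat is that $f$ and $g$ should be analytic in a neighbourhood of $x$ with $|x\epsilon|<1$ so that the formal power series expansion of $e^{\epsilon x^{2}\partial_x}$ actually converges to the Möbius shift; for real-analytic inputs this is automatic.
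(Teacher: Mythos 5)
Your proof is correct, but it takes a different route from the paper's. The paper proves \eqref{equa9} in two lines by the change of variable $x'=-\frac1x$, under which $x^{2}D_{x}$ becomes $D_{x'}$, and then simply invokes the already-established classical exponential identity \eqref{equa6} in the $x'$ coordinate before substituting back. You instead rebuild the identity from first principles: you make the bilinear interpretation of $x^{2}D_{x}$ explicit as $x^{2}\partial_{x}-(x')^{2}\partial_{x'}$, factor the exponential using the commutativity of the two vector fields, and obtain the M\"obius shift by integrating the flow equation $dy/ds=y^{2}$ directly. The two arguments hinge on the same fact --- that $x^{2}\partial_{x}$ is the translation field in the coordinate $-1/x$, equivalently that its time-$\epsilon$ flow is $x\mapsto x/(1-x\epsilon)$ --- but your version is self-contained and makes the analytic caveats (convergence of the exponential series, the restriction $|x\epsilon|<1$) visible, whereas the paper's is shorter because it delegates the factorization of $e^{\epsilon D}$ into a forward shift on $f$ and a backward shift on $g$ to \eqref{equa6}. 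Your explicit discussion of what $x^{2}D_{x}$ must mean is a genuine improvement in rigor: the paper asserts $x^{2}D_{x}=D_{x'}$ without comment, and that identification is valid only under precisely the reading you spell out.
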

\begin{proof}
Making use of the change of variable
 $x^{2}D_{x}$=$D_{x'}, \ x'=-\frac1x$ is the idea to prove the identity, i.e.
 \begin{equation}\label{equa11}
e^{\epsilon x^{2}D_{x}}f(x)g(x)=e^{\epsilon D_{x'}}f(-\frac{1}{x'})g(-\frac{1}{x'}).
 \end{equation}
 Using eq.\eqref{equa6} for the right hand side of eq.\eqref{equa11}, we end up the proof with\\
 \\
$e^{\epsilon x^{2}D_{x}}f(x)g(x)=f(-\frac{1}{x'+\epsilon})g(-\frac{1}{x'-\epsilon})=f(\frac{1}{\frac{1}{x}-\epsilon})g(\frac{1}{\frac{1}{x}+\epsilon})=
\Lambda_{\epsilon}f(x)\Lambda_{\epsilon}^{-1}g(x).$\\
 \end{proof}To give the definition of the generalized $q$-Toda equation, we need the following cental generalized difference operator.
\begin{definition} \label{deflax}
The central q-difference operator $\triangle_{x}^{2}$ acts on an arbitrary function $f(x),x\in\mathbb{R}$,as
\begin{equation}\label{equa12}
\triangle_{x}^{2}f(x)=f(\frac{x}{1-x\epsilon})+f(\frac{x}{1+x\epsilon})-2f(x).
\end{equation}
which is easily rewritten as
$\triangle_{x}^{2}f(x)=(\Lambda_{\epsilon}+\Lambda_{\epsilon}^{-1}-2)f(x)$.
\end{definition}

In the next section, we will try to use the above defined generalize $q$-shift operator to define the generalized $q$-Toda\ equation.

\section{The \ generalized $q$-Toda\ equation}
The well-known Toda equation eq.\eqref{equa17} represents the motion of the one-dimensional particles by
\begin{equation}\label{equa13}
 \frac{d^{2}y_{n}}{d t^{2}}=e^{y_{n-1}-y_{n}}-e^{y_{n}-y_{n+1}}.
\end{equation}
 By introducing the force
\begin{equation}\label{equa14}
U_{n}=e^{y_{n-1}-y_{n}}-1.
\end{equation}
 the Toda equation eq.\eqref{equa13} turns out to be
\begin{equation}\label{equa15}
\frac{d^{2}}{dt^{2}}\log (1+U_{n})=U_{n+1}+U_{n-1}-2U_{n}.
\end{equation}
Similarly as Toda equation, we define the generalized $q$-Toda equation(GQTE) as follows
\begin{equation}\label{equaGQT}
 \epsilon^2\frac{d^{2}\phi(x)}{d t^{2}}=e^{\phi(\frac{x}{1+x\epsilon})-\phi(x)}-e^{\phi(x)-\phi(\frac{x}{1-x\epsilon})},
\end{equation}
 By introducing the force
\begin{equation}\label{equaGQT}
V=e^{\phi(\frac{x}{1+x\epsilon})-\phi(x)}-1,
\end{equation}
then the GQTE becomes
\begin{equation}\label{equa16}
\epsilon^2\frac{d^{2}}{dt^{2}}\log(1+V(x,t))=\triangle_{x}^{2}V(x,t)=V(\frac{x}{1-x\epsilon},t)+V(\frac{x}{1+x\epsilon},t)-2V(x,t).
\end{equation}
It is necessary to introduce the dependent variable transformation as
\begin{equation}\label{equa17}
V(x,t)=\frac{d^{2}}{dt^{2}}\log f(x,t).
\end{equation}
Then the bilinear form for $f(x,t)$ is evolved as
\begin{equation}\label{equa18}
V(x,t)=\frac{f_{tt}f-f_{t}^{2}}{f^{2}}=\frac{f(\frac{x}{1-x\epsilon},t)f(\frac{x}{1+x\epsilon},t)}{f^{2}}-1.
\end{equation}
Then the generalized $q$-Toda equation can be rewritten as a Hirota bilinear form in terms of Hirota D-operator as
\begin{equation}\label{equa19}
P(D)f(x,t)\cdot f(x,t)=[D_{t}^{2}-(e^{\epsilon x^{2}D_{x}}+e^{-\epsilon x^{2}D_{x}})-2)]f(x,t)\cdot f(x,t)=0,
\end{equation}
by multiplying eq.\eqref{equa18} by $2f^{2}(x,t)$ and using the q-exponential identity eq.\eqref{equa9}. Supposing function $f$ has finite perturbation expansion around a formal perturbation parameter
$\varepsilon$  as
\begin{equation}\label{equa20}
f(x,t)=1+\varepsilon f^{(1)}(x,t)+\varepsilon^{2}f^{(2)}(x,t)+...
\end{equation}
Substituting  eq.\eqref{equa20} into generalized $q$-Toda equation
\begin{equation}\label{equa21}
P(D)f(x,t)\cdot f(x,t)=0,
\end{equation}
we have
\begin{equation}\label{equa22}
\begin{split}
&P(D)f(x,t)\cdot f(x,t) \\
&=P(D)[1\cdot 1+\varepsilon 1(\cdot f^{(1)}+f^{(1)}\cdot 1)+\varepsilon^2 (1\cdot f^{(2)}+f^{(2)}\cdot 1+f^{(1)}\cdot f^{(2)})\\
&+\varepsilon^{3}(1\cdot f^{(3)}+f^{(3)}\cdot 1+f^{(1)}\cdot f^{(2)}+f^{(2)}\cdot f^{(1)})\\
&+\varepsilon^{4}(1\cdot f^{(4)}+f^{(4)}\cdot 1+f^{(1)}\cdot f^{(3)}+f^{(3)}\cdot f^{(1)}+f^{(2)}\cdot f^{(2)})+...].
\end{split}
\end{equation}
The coefficient of the first term $\varepsilon^{0}$ is trivial. For the coefficient of $\varepsilon^{1}$, we get
\begin{equation}\label{equa23}
P(D)1\cdot f^{(1)}+f^{(1)}\cdot 1=2P(\partial)f^{(1)}=2[\partial_{t}^{2}-(e^{\epsilon x^{2}\partial_{x}}+e^{-\epsilon x^{2}\partial_{x}}-2)]f^{(1)}=0.
\end{equation}
Then the equation $f^{(1)}$ has exponential type solution as
\begin{equation}\label{equa24}
f^{(1)}(x,t)=e^{-\frac{\alpha}{x}+\beta t+\eta},
\end{equation}
where $\alpha,\beta,\eta $ are arbitrary constants with the dispersion relation as
\begin{equation}\label{equa25}
\beta^{2}=e^{\alpha \epsilon}+e^{-\alpha \epsilon}-2.
\end{equation}
Comparing the coefficients of $\varepsilon^{2}$ in eq.\eqref{equa22} will yield
\begin{equation}P(D)1\cdot f^{(2)}+f^{(2)}\cdot 1+f^{(1)}\cdot f^{(1)}=2P(\partial)f^{(2)}+P(D)f^{(1)}\cdot f^{(1)}=0,
\end{equation}
which implies exactly
\begin{equation}\label{equa26}
[D_{t}^{2}-(e^{\epsilon x^{2}D_{x}}+e^{-\epsilon x^{2}D_{x}})-2)]f^{(1)}(x,t)\cdot f^{(1)}(x,t)
\end{equation}
\begin{equation}\notag
=-2[\partial_{t}^{2}-(e^{\epsilon x^{2}\partial_{x}}+e^{-\epsilon x^{2}\partial_{x}}-2)]f^{(2)}(x,t).
\end{equation}
Since $f^{(1)}$ given in eq.\eqref{equa24} satisfies the form of eq.\eqref{equa26} by considering eq.\eqref{equa25}, it is logical to take all order terms as zero,  i.e.$f^{(j)}=0,j\geq 2$. Therefore without loss of generality, we let $\varepsilon=1$. Then one-q-soliton is constructed by the virtue of eq.\eqref{equa24} and eq.\eqref{equa25}
as
\begin{equation}\label{equa28}
V(x,t)=\frac{\beta^{2}e^{-\frac{\alpha}{x}+\beta t+\eta}}{(1+e^{-\frac{\alpha}{x}+\beta t+\eta})^{2}}.
\end{equation}
The solution of one-q-soliton $V$ can be seen from Figure \ref{1}.
\begin{figure}[h!]
\centering
\raisebox{0.85in}{}\includegraphics[scale=0.45]{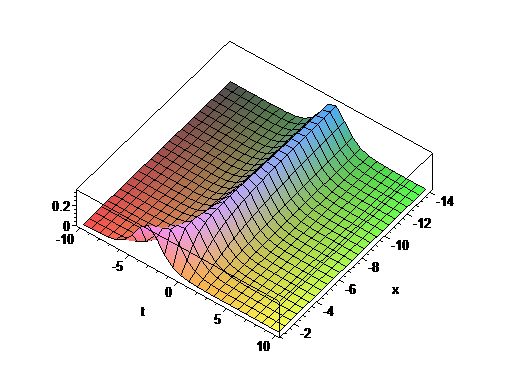}
\caption{One-q-soliton solution $V$ of generalized $q$-Toda equation with $e^{\epsilon}=1.25,\alpha_{1}=-5,\beta_{1}=-1.1745$.}\label{1}
\end{figure}

We pick the starting solution of  eq.\eqref{equa23}  as the assumption of two-soliton solutions.
\begin{equation}\label{equa29}
f^{(1)}=2\cosh(-\frac{\alpha_{1}}{x}+\beta_{1}t+\eta_{1}),
\end{equation}
where $\alpha_{i},\eta_{i},i=1,2$ are arbitrary constants with
the related dispersion relation
\begin{equation}\label{equa30}
\beta_{i}^{2}=e^{\alpha_{i}\epsilon}+e^{-\alpha_{i}\epsilon}-2, i=1,2.
\end{equation}
Apparently the use of vector notation
\begin{equation}\label{equa31}
p_{1}\pm p_{2}=(\beta_{1}\pm \beta_{2},\alpha_{1}\pm \alpha_{2},\eta_{1}\pm \eta_{2}),
\end{equation}
can lead to dispersion relation eq.\eqref{equa30} as $P(p_i)=0, i=1,2....$ Then we get
\begin{equation}\label{equa32}
-P(\partial)f^{(2)}=[(\beta_{1}-\beta_{2})^{2}-(e^{(\alpha_{1}-\alpha_{2})\epsilon}+e^{(\alpha_{2}-\alpha_{1})\epsilon}-2)]
e^{-\frac{\alpha_{1}+\alpha_{2}}{x}+(\beta_{1}+\beta_{2})t+\eta_{1}+\eta_{2}}.
\end{equation}
Therefore, the form of $f^{(2)}$ can be
\begin{equation}\label{equa33}
f^{(2)}=A(1,2)e^{-\frac{\alpha_{1}+\alpha_{2}}{x}+(\beta_{1}+\beta_{2})t+\eta_{1}+\eta_{2}}.
\end{equation}
Substituting such $f^{(2)}$ into eq.\eqref{equa32} will help us determine the position of two-q-soliton as
\begin{equation}\label{equa34}
A(1,2)=-\frac{(\beta_{1}-\beta_{2})^{2}-(e^{(\alpha_{1}-\alpha_{2})\epsilon}+e^{(\alpha_{2}-\alpha_{1})\epsilon}-2)}
{(\beta_{1}+\beta_{2})^{2}-(e^{(\alpha_{1}+\alpha_{2})\epsilon}+e^{-(\alpha_{1}+\alpha_{2})\epsilon}-2)}=-\frac{P(p_{1}-p_{2})}{P(p_{1}+p_{2})}.
\end{equation}
Supposing $f^{(3)}=0$, by the use of the dispersion relation eq.\eqref{equa30} the coefficient of $\varepsilon^{3}$ vanishes trivially and so do the rest of $\varepsilon^{j}, j>3$. That means we have a good truncation up to $\varepsilon^{3}$ which leads to the two-q-soliton solution as
\begin{equation}\label{equa35}
f(x,t)=1+e^{-\frac{\alpha_{1}}{x}+\beta_{1}t+\eta_{1}}+e^{-\frac{\alpha_{2}}{x}+\beta_{2}t+\eta_{2}}
+A(1,2)e^{-\frac{\alpha_{1}+\alpha_{2}}{x}+(\beta_{1}+\beta_{2})t+\eta_{1}+\eta_{2}}.
\end{equation}
Therefore, we illustrate the collision of two-q-solitons as
Figure \ref{2}.
\begin{figure}[h!]
\centering
\raisebox{0.85in}{}\includegraphics[scale=0.45]{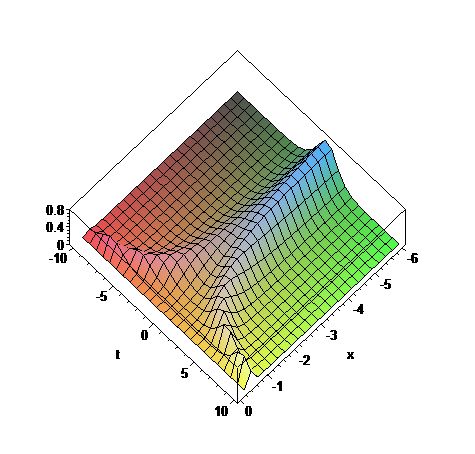}
\caption{Two-q-soliton solution $V$ of generalized $q$-Toda equation with $e^{\epsilon}=1.25,\alpha_{1}=-5,\alpha_{2}=6. $ }\label{2}
\end{figure}

To further derive three-soliton solution, we choose the starting solution of  eq.\eqref{equa23} as the assumption  as
\begin{equation}\label{equa36}
f^{(1)}=\sum\limits^{3}_{i=1}e^{-\frac{\alpha_{i}}{x}+\beta_{i}t+\eta_{i}},
\end{equation}
where $\alpha_{i}$,$\eta_{i}$ are arbitrary constants for $ i=1,2,3$. Similarly to the precious arguments, the coefficient of $\varepsilon^{0}$ vanishes
trivially. From the coefficient of $\varepsilon^{1}$, we have the corresponding dispersion relation
\begin{equation}\label{equa37}
\beta_{i}^{2}=e^{\alpha_{i}\epsilon}+e^{-\alpha_{i}\epsilon}-2, i=1,2,3.
\end{equation}
From the coefficient of $\varepsilon^{2}$, we can obtain
\begin{equation}\label{equa38}
-P(\partial)f^{(2)}=\sum\limits^{(3)}_{i<j}[(\beta _{i}-\beta_{j})^{2}-(e^{(\alpha_{i}-\alpha_{j})\epsilon}+e^{(\alpha_{i}-\alpha_{j})\epsilon}-2)]
e^{-\frac{\alpha_{i}+\alpha_{j}}{x}+(\beta_{i}+\beta_{j})t+\eta_{i}+\eta_{j}}.
\end{equation} The equation
eq.\eqref{equa38} implies the explicit form of $f^{(2)}$
\begin{equation}\label{equa39}
f^{(2)}=\sum\limits^{(3)}_{i<j}A(i,j)e^{-\frac{\alpha_{i}+\alpha_{j}}{x}+(\beta_{i}+\beta_{j})t+\eta_{i}+\eta_{j}},
\end{equation}
with
\begin{equation}\label{equa40}
A(i,j)=-\frac{P(p_{i}-p_{j})}{P(p_{i}+p_{j})}=-\frac{(\beta_{i}-\beta_{j})^{2}-(e^{(\alpha_{i}-\alpha_{j})\epsilon}+e^{(\alpha_{i}-\alpha_{j})\epsilon}-2)}
{(\beta_{i}+\beta_{j})^{2}-(e^{(\alpha_{i}+\alpha_{j})\epsilon}+e^{-(\alpha_{i}+\alpha_{j})\epsilon}-2)}.
\end{equation}
 For the coefficient of $\varepsilon^{3}$, we have
$$P(D)1\cdot f^{(3)}+f^{(3)}\cdot 1+f^{(1)}\cdot f^{(2)}+f^{(2)}\cdot f^{(1)}=0.$$
We can also rewrite them as
\begin{eqnarray}\notag
-P(\partial)f^{(3)}&=&(A(1,2)P(p_{3}-p_{1}-p_{2})+A(1,3)P(p_{2}-p_{1}-p_{3})+A(2,3)P(p_{1}-p_{2}-p_{3}))\\ \label{41}
&&\times
e^{-\frac{\alpha_{1}+\alpha_{2}+\alpha_{3}}{x}+(\beta_{1}+\beta_{2}+\beta_{3})t+\eta_{1}+\eta_{2}+\eta_{3}}.
\end{eqnarray}
Suppose that  $f^{(3)}$ is of the form
\begin{equation}\label{equa42}
f^{(3)}=A(1,2,3)e^{-\frac{\alpha_{1}+\alpha_{2}+\alpha_{3}}{x}+(\beta_{1}+\beta_{2}+\beta_{3})t+\eta_{1}+\eta_{2}+\eta_{3}},
\end{equation}
then one can find
\begin{equation}\label{equa43}
A(1,2,3)=-\frac{A(1,2)P(p_{3}-p_{1}-p_{2})+A(1,3)P(p_{2}-p_{1}-p_{3})+A(2,3)P(p_{1}-p_{2}-p_{3})}{P(p_{1}+p_{2}+p_{3})}.
\end{equation}
Following the steps, one can find we can suppose the vanishing of $f^{(4)}$ and it is a reasonable truncation to terms of  $\varepsilon^{4}$, i.e. the  from the equation eq.\eqref{equa22} becomes\\
\begin{equation}
2P(D)f^{(1)}\cdot f^{(3)}+P(Df^{(2)}\cdot f^{(2)}=0.
\end{equation}
which means the following condition holds
\begin{equation}\label{equa44}
A(1,2,3)=A(1,2)A(1,3)A(2,3).
\end{equation}
Then we can express the solution of three-q-soliton(see Figure \ref{3}) as
\begin{equation}\label{equa46}
\begin{split}
f(x,t)=1+\sum\limits^{3}_{i=1}e^{-\frac{\alpha_{i}}{x}+\beta_{i}t+\eta_{i}}
+\sum\limits^{3}_{i<j}A(i,j)e^{-\frac{\alpha_{i}+\alpha_{j}}{x}+(\beta_{i}+\beta_{j})t+\eta_{i}+\eta_{j}}\\
+A(1,2)A(1,3)A(2,3)e^{-\frac{\alpha_{1}+\alpha_{2}+\alpha_{3}}{x}+(\beta_{1}+\beta_{2}+\beta_{3})t+\eta_{1}+\eta_{2}+\eta_{3}}.
\end{split}
\end{equation}
\begin{figure}[h!]
\centering
\raisebox{0.85in}{}\includegraphics[scale=0.40]{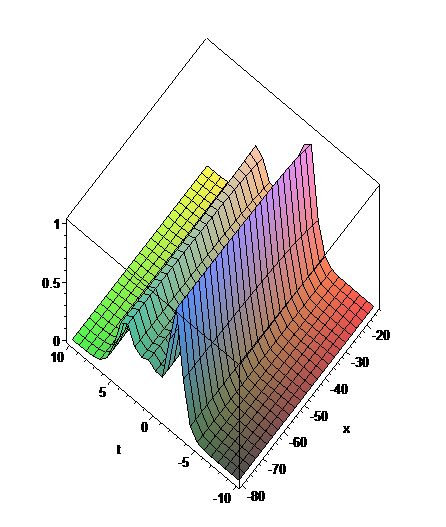}
\hskip 0.03cm
\raisebox{0.85in}{}\raisebox{-0.1cm}{\includegraphics[scale=0.50]{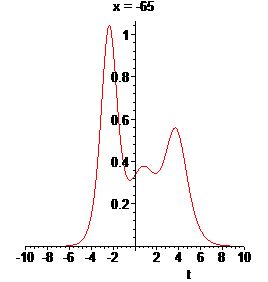}}
\caption{Three-q-soliton solution $V$ of generalized $q$-Toda equation with  $e^{\epsilon}=1.25,\alpha_{1}=-5,\alpha_{2}=6. \alpha_{3}=-7.9141,\beta_{1}=-1.1745,\beta_{2}=-1.4411, \beta_{3}=2.0045$} \label{3}
\end{figure}
\\
The above three-soliton solutions show  the great integrable possibility in a certain sense. To deeply prove the integrability, we will give the Lax pair of the
generalized  $q$-Toda\ hierarchy and further generalize it to a whole integrable hierarchy in the next section.

\section{The\  generalized $q$-Toda\ hierarchy}

Now we will consider that the algebra of the shift operator $\Lambda_{\epsilon} :=e^{\epsilon x^2\d_x}$. A Left
multiplication by  $X$ is as $X\Lambda_{\epsilon}^j$, $(
X\Lambda_{\epsilon}^j)(g)(x):=X(x)\circ g(\frac{x}{1-j\epsilon x})$ with defining the product
$(X(x)\Lambda_{\epsilon}^i)\circ(Y(x)\Lambda_{\epsilon}^j):=X(x)Y(\frac{x}{1-i\epsilon x})\Lambda_{\epsilon}^{i+j}.$

 Now we
introduce  the following free operators $ W_0,\bar  W_0$
\begin{align}
 \label{def:E}  W_0&:=e^{\sum_{j=0}^\infty
 t_{j}\frac{\Lambda_{\epsilon}^j}{\epsilon j!}}, \\
\label{def:barE}   \bar W_0&:=e^{-\sum_{j=0}^\infty
   t_{j}\frac{\Lambda_{\epsilon}^{-j}}{\epsilon j!}},
\end{align}
where $t_{j}\in \mathbb{R}$
will play the role of continuous times.

 We   define the dressing operators $W,\bar W$ as follows
\begin{align}
\label{def:baker}W&:=S\circ W_0,\ \  \bar W:=\bar S\circ \bar  W_0,
\end{align}
where  $S,\bar S$ have expansions as
\begin{gather}
\label{expansion-S}
\begin{aligned}
S&=1+\omega_1(x)\Lambda_{\epsilon}^{-1}+\omega_2(x)\Lambda_{\epsilon}^{-2}+\cdots,\\
\bar S&=\bar\omega_0(x)+\bar\omega_1(x)\Lambda_{\epsilon}+\bar\omega_2(x)\Lambda_{\epsilon}^{2}+\cdots.
\end{aligned}
\end{gather}

The inverse operators $S^{-1},\bar S^{-1}$ of operators $S,\bar S$ have expansions of the form
\begin{gather}
\begin{aligned}
S^{-1}&=1+\omega'_1(x)\Lambda_{\epsilon}^{-1}+\omega'_2(x)\Lambda_{\epsilon}^{-2}+\cdots,\\
\bar S^{-1}&=\bar\omega'_0(x)+\bar\omega'_1(x)\Lambda_{\epsilon}+\bar\omega'_2(x)\Lambda_{\epsilon}^{2}+\cdots.
\end{aligned}
\end{gather}
 The Lax  operator $\L$ of the generalized $q$-deformed Toda hierarchy
 is defined by
\begin{align}
\label{Lax}  \L&:=W\circ\Lambda_{\epsilon}\circ W^{-1}=\bar W\circ\Lambda_{\epsilon}^{-1}\circ \bar W^{-1},
\end{align}
and
have the following expansions
\begin{gather}\label{lax expansion}
\begin{aligned}
 \L&=\Lambda_{\epsilon}+U(x)+V(x)\Lambda_{\epsilon}^{-1}.
\end{aligned}
\end{gather}
 In fact the Lax  operators $\L$
 are also be equivalently defined by
\begin{align}
\label{two dressing}  \L&:=S\circ\Lambda_{\epsilon}\circ S^{-1}=\bar S\circ\Lambda_{\epsilon}^{-1}\circ \bar S^{-1}.
\end{align}

\subsection{ Lax equations of the GQTH}

In this section we will give the  Lax equations of the GQTH.
Let us firstly introduce some convenient notation such as  the operators $B_{j}$ defined as
$B_{j}:=\frac{\L^j}{j!}.$
Now we give the definition of the  generalized $q$-Toda hierarchy(GQTH).
\begin{definition}The  generalized $q$-Toda hierarchy is a hierarchy in which the dressing operators $S,\bar S$ satisfy following Sato equations
\begin{align}
\label{satoSt} \epsilon\partial_{t_{j}}S&=-(B_{j})_-S,& \epsilon\partial_{t_{j}}\bar S&=(B_{j})_+\bar S.\end{align}
\end{definition}
Then one can easily get the following proposition about $W,\bar W.$

\begin{proposition}The dressing operators $W,\bar W$ are subject to following Sato equations
\begin{align}
\label{Wjk} \epsilon\partial_{t_{j}}W&=(B_{j})_+ W,& \epsilon\partial_{t_{j}}\bar W&=-(B_{j})_-\bar W.  \end{align}
\end{proposition}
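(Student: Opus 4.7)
The plan is to derive the Sato equations for $W$ and $\bar W$ directly from those for $S,\bar S$ by differentiating the defining relations $W = S\circ W_0$ and $\bar W = \bar S\circ \bar W_0$ and then using the identity $B_j = \L^j/j!$ together with the dressing expression $\L = S\circ\Lambda_\epsilon\circ S^{-1} = \bar S\circ\Lambda_\epsilon^{-1}\circ\bar S^{-1}$.

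First I would compute $\epsilon\partial_{t_j} W_0$ and $\epsilon\partial_{t_j}\bar W_0$ directly from the definitions \eqref{def:E}, \eqref{def:barE}. Because the exponents are linear in $t_j$ and all the $\Lambda_\epsilon^k$ commute with one another, the derivatives pass inside the exponentials cleanly, giving $\epsilon\partial_{t_j}W_0 = \tfrac{\Lambda_\epsilon^j}{j!}W_0$ and $\epsilon\partial_{t_j}\bar W_0 = -\tfrac{\Lambda_\epsilon^{-j}}{j!}\bar W_0$. Then, using the Leibniz rule,
\begin{equation*}
\epsilon\partial_{t_j}W = \epsilon(\partial_{t_j}S)\circ W_0 + S\circ \epsilon\partial_{t_j}W_0 = -(B_j)_-\,S\circ W_0 + S\circ\frac{\Lambda_\epsilon^j}{j!}\circ W_0,
\end{equation*}
where I have substituted the given Sato equation for $S$. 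Inserting $S^{-1}\circ S = 1$ in the second term rewrites it as $S\circ\tfrac{\Lambda_\epsilon^j}{j!}\circ S^{-1}\circ W$.

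The key algebraic step is recognizing that $S\circ \tfrac{\Lambda_\epsilon^j}{j!}\circ S^{-1} = \tfrac{1}{j!}(S\circ\Lambda_\epsilon\circ S^{-1})^j = \tfrac{\L^j}{j!} = B_j$ by \eqref{two dressing}. Hence the above reduces to $\epsilon\partial_{t_j}W = -(B_j)_- W + B_j W = (B_j)_+ W$, which is the first equation in \eqref{Wjk}. The computation for $\bar W$ is entirely parallel: differentiating $\bar W = \bar S\circ \bar W_0$ and applying the Sato equation for $\bar S$ gives $(B_j)_+\bar W - \bar S\circ\tfrac{\Lambda_\epsilon^{-j}}{j!}\circ \bar S^{-1}\circ\bar W$, and then $\bar S\circ\tfrac{\Lambda_\epsilon^{-j}}{j!}\circ\bar S^{-1} = \tfrac{1}{j!}(\bar S\circ\Lambda_\epsilon^{-1}\circ\bar S^{-1})^j = B_j$, yielding $\epsilon\partial_{t_j}\bar W = (B_j)_+\bar W - B_j\bar W = -(B_j)_-\bar W$.

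The only delicate point is the justification of the identity $S\circ\Lambda_\epsilon^j\circ S^{-1} = (S\circ\Lambda_\epsilon\circ S^{-1})^j$, which is formally just a telescoping product but requires the noncommutative composition law for the twisted shift $\Lambda_\epsilon$ spelled out at the start of Section 4 (so that $S\circ S^{-1}$ really telescopes between the $j$ factors of $\Lambda_\epsilon$). Aside from tracking this composition rule carefully, no further obstacles arise, and no splitting into positive/negative parts beyond what is already encoded in the Sato equations for $S,\bar S$ is needed.
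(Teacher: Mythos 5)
Your derivation is correct and is exactly the standard argument the paper has in mind (the paper itself offers no written proof, merely asserting that the proposition follows "easily" from the Sato equations \eqref{satoSt}): differentiate $W=S\circ W_0$, use $\epsilon\partial_{t_j}W_0=\tfrac{\Lambda_\epsilon^j}{j!}W_0$, insert $S^{-1}\circ S$, and conjugate via \eqref{two dressing} to get $B_j=-(B_j)_-+ (B_j)_+$ acting on $W$, and likewise for $\bar W$. Your care about the telescoping identity $S\circ\Lambda_\epsilon^{j}\circ S^{-1}=(S\circ\Lambda_\epsilon\circ S^{-1})^{j}$ and the commutation of the constant-coefficient powers of $\Lambda_\epsilon$ inside $W_0$ is the right (and only) delicate point.
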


 From the previous proposition one can derive the following  Lax equations for the Lax operators.
\begin{proposition}\label{Lax}
 The  Lax equations of the GQTH are as follows
   \begin{align}
\label{laxtjk}
  \epsilon\partial_{t_{j}} \L&= [(B_{j})_+,\L].
  \end{align}
\end{proposition}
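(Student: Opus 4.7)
The plan is a direct computation: differentiate the dressing expression $\L = W\circ\Lambda_\epsilon\circ W^{-1}$ with respect to $t_j$, substitute the Sato equation for $W$ from Proposition \ref{Wjk}, and rearrange the result into a commutator. The key auxiliary fact I will need is that $\partial_{t_j}(W^{-1}) = -W^{-1}\circ(\partial_{t_j}W)\circ W^{-1}$, which follows formally from differentiating the identity $W\circ W^{-1}=1$ in the ring of pseudo-difference operators in $\Lambda_\epsilon$. Note also that $\Lambda_\epsilon$ itself carries no $t_j$ dependence, so differentiation passes through it.

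Concretely, first I will write
\begin{equation*}
\epsilon\partial_{t_j}\L \;=\; (\epsilon\partial_{t_j}W)\circ\Lambda_\epsilon\circ W^{-1} \;+\; W\circ\Lambda_\epsilon\circ(\epsilon\partial_{t_j}W^{-1}).
\end{equation*}
Then I will insert $\epsilon\partial_{t_j}W=(B_j)_+ W$ in the first term and $\epsilon\partial_{t_j}W^{-1} = -W^{-1}\circ(B_j)_+$ in the second. The two terms collapse into
\begin{equation*}
(B_j)_+\circ(W\circ\Lambda_\epsilon\circ W^{-1}) \;-\; (W\circ\Lambda_\epsilon\circ W^{-1})\circ(B_j)_+ \;=\; [(B_j)_+,\L].
\end{equation*}

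Because the Lax operator admits the second dressing $\L=\bar W\circ\Lambda_\epsilon^{-1}\circ\bar W^{-1}$ from \eqref{Lax}, I will perform the analogous calculation using the Sato equation $\epsilon\partial_{t_j}\bar W=-(B_j)_-\bar W$ and check that the two answers agree. This gives $\epsilon\partial_{t_j}\L = -[(B_j)_-,\L]$, and to reconcile it with the first expression I will use $(B_j)_+=B_j-(B_j)_-$ together with the obvious identity $[B_j,\L]=[\L^j/j!,\L]=0$; the two forms then coincide. I would also briefly note that the same derivation works starting from $\L=S\circ\Lambda_\epsilon\circ S^{-1}$ via the Sato equation $\epsilon\partial_{t_j}S=-(B_j)_-S$, so the Lax equation is independent of which dressing is used.

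There is no real obstacle here once the Sato equations are given; the only place where one must be careful is the order of multiplication in the associative but non-commutative algebra generated by $\Lambda_\epsilon$ and the multiplication operators $X(x)$, since $\Lambda_\epsilon\circ X(x) = X(x/(1-\epsilon x))\circ\Lambda_\epsilon$ rather than a simple scalar shift. Keeping the products in the correct order throughout the brief manipulation above, and using the commutation $[B_j,\L]=0$ for the cross-check, delivers the Lax equation \eqref{laxtjk} without further computation.
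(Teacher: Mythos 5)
Your proposal is correct and follows exactly the route the paper intends: the paper gives no written proof, simply asserting that the Lax equation follows from the Sato equations $\epsilon\partial_{t_j}W=(B_j)_+W$, $\epsilon\partial_{t_j}\bar W=-(B_j)_-\bar W$, and your computation (differentiating $\L=W\circ\Lambda_\epsilon\circ W^{-1}$, using $\partial_{t_j}W^{-1}=-W^{-1}(\partial_{t_j}W)W^{-1}$, and reconciling the two dressings via $[B_j,\L]=0$) is precisely that standard derivation. The cross-check with the second dressing is a worthwhile addition, since it confirms the consistency of the two Sato equations.
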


To see this kind of hierarchy more clearly, the  generalized $q$-Toda equations as the $t_{1}$ flow equations  will be given in the next subsection.
\subsection{The  generalized $q$-Toda equations}
 As a consequence Sato equations, after taking into account that   $S$ and $\bar S$, the $t_1$ flow of $\L$ in the form of $\L=\Lambda_{\epsilon}+U+V\Lambda_{\epsilon}^{-1}$ is as
\begin{gather}\label{exp-omega}
\begin{aligned}
  \epsilon\partial_{t_{1}} \L&= [\Lambda_{\epsilon}+U,V\Lambda_{\epsilon}^{-1}],
  \end{aligned}
\end{gather}
which lead to generalized $q$-Toda equation
\[\epsilon\partial_{t_{1}} U&=& V(\frac{x}{1-\epsilon x})-V(x),\\ \label{toda}
\epsilon\partial_{t_{1}} V&=& U(x)V(x)-V(x)U(\frac{x}{1+\epsilon x}).\]

From Sato equation we deduce the following set of nonlinear
partial differential-difference equations
\begin{align}\left\{
\begin{aligned}
 \omega_1(x)-\omega_1(\frac{x}{1+\epsilon x})&=\epsilon\partial_{t_1}(e^{\phi(x)})\cdot e^{-\phi(x)},\\
\epsilon\partial_{t_1}\omega_1(x)&=-e^{\phi(x)}e^{-\phi(\frac{x}{1-\epsilon x})}.\end{aligned}\right.
\label{eq:multitoda}
\end{align}
Observe that if we cross the two first equations, then we get
the generalized $q$-Toda equation \eqref{equaGQT}.
To give a linear description of the GQTH, we introduce  wave functions  $\psi,\bar\psi$
defined by
\begin{gather}\label{baker-fac}
\begin{aligned}
\psi&= W\cdot\chi, &
\bar\psi&=\bar W\cdot \bar\chi,
\end{aligned}
\end{gather}
where
\[
\chi(z):=z^{-\frac{1}{x \epsilon}},\ \ \bar \chi(z):=z^{\frac{1}{x\epsilon}},\
\]
and the $``\cdot"$ means the action of an operator on a function.
Note that $\Lambda_{\epsilon}\cdot\chi=z\chi$ and  the following asymptotic expansions
can be defined
\begin{gather}\label{baker-asymp}
\begin{aligned}
  \psi&=(1+\omega_1(x)z^{-1}+\cdots)\,\psi_0(z),&\psi_0&:=z^{-\frac{1}{x \epsilon}}
 e^{\sum_{j=1}^\infty t_{j}\frac{z^j}{\epsilon j!}},\\
\bar\psi&=(\bar\omega_0(x)+\bar\omega_1(x)z+\cdots)\,\bar\psi_0(z),
&\bar\psi_0&:=z^{\frac{1}{x \epsilon}}
e^{-\sum_{j=0}^\infty
   t_{j}\frac{z^{-j}}{\epsilon j!}}.
\end{aligned}
\end{gather}

We can further get linear equations of the GQTH in the following proposition.

\begin{proposition}The  wave functions $\psi,\bar\psi$ are subject to following Sato equations
\begin{align}
 \L\cdot\psi&=z\psi,\ \ \ &&\L\cdot\bar\psi=z\bar\psi,\\
 \epsilon\partial_{t_j}\psi&=(B_{j})_+\cdot \psi,& \epsilon\partial_{t_j}\bar \psi&=-(B_{j})_-\cdot\bar \psi.  \end{align}
\end{proposition}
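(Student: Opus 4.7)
The plan is to verify the four stated identities directly from the definitions of $\psi$, $\bar\psi$ and the previously established Sato equations for $W$, $\bar W$ (equation \eqref{Wjk}). The key algebraic fact to establish first is that the bare functions $\chi$ and $\bar\chi$ are eigenfunctions of $\Lambda_{\epsilon}^{\pm 1}$ with eigenvalue $z$, so that the eigenvalue property of $\L$ is inherited via dressing.

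First I would compute the action of $\Lambda_{\epsilon}=e^{\epsilon x^2\partial_x}$ on the reference functions. Since $\Lambda_{\epsilon}$ sends $x\mapsto x/(1-\epsilon x)$, we have $1/x\mapsto 1/x-\epsilon$, so
\begin{equation*}
\Lambda_{\epsilon}\cdot\chi(z)=z^{-(1/(x\epsilon)-1)}=z\,\chi(z),\qquad \Lambda_{\epsilon}^{-1}\cdot\bar\chi(z)=z\,\bar\chi(z).
\end{equation*}
Using the two equivalent dressing representations \eqref{Lax} of $\L$, one gets
\begin{equation*}
\L\cdot\psi=W\circ\Lambda_{\epsilon}\circ W^{-1}\cdot W\cdot\chi=W\cdot(\Lambda_{\epsilon}\chi)=zW\cdot\chi=z\psi,
\end{equation*}
and symmetrically $\L\cdot\bar\psi=\bar W\circ\Lambda_{\epsilon}^{-1}\circ\bar W^{-1}\cdot\bar W\cdot\bar\chi=\bar W\cdot(\Lambda_{\epsilon}^{-1}\bar\chi)=z\bar\psi$.

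For the time evolutions, I would differentiate $\psi=W\cdot\chi$ with respect to $t_j$. Since $\chi(z)=z^{-1/(x\epsilon)}$ is independent of the times,
\begin{equation*}
\epsilon\partial_{t_j}\psi=(\epsilon\partial_{t_j}W)\cdot\chi=(B_{j})_+\,W\cdot\chi=(B_{j})_+\cdot\psi,
\end{equation*}
where the second equality uses the Sato equation \eqref{Wjk} for $W$. The same manipulation applied to $\bar\psi=\bar W\cdot\bar\chi$, together with the Sato equation $\epsilon\partial_{t_j}\bar W=-(B_{j})_-\bar W$, yields $\epsilon\partial_{t_j}\bar\psi=-(B_{j})_-\cdot\bar\psi$.

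The only subtlety, and therefore the most delicate step, lies in the very first eigenvalue computation: one must be careful that the formal operator $\Lambda_{\epsilon}$ is defined compatibly with its action on the noninteger power $z^{\pm 1/(x\epsilon)}$, in particular that the shift $x\mapsto x/(1-\epsilon x)$ is being applied consistently in the exponent. Once this is confirmed, the rest of the proof is a purely formal consequence of the dressing definitions and the Sato equations already proved.
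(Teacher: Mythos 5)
Your proof is correct and follows exactly the route the paper intends: the paper omits a written proof of this proposition but records the key fact $\Lambda_{\epsilon}\cdot\chi=z\chi$ immediately beforehand, and your argument (conjugation through the dressing operators \eqref{Lax} for the eigenvalue relations, plus differentiating $\psi=W\cdot\chi$ and invoking the Sato equations \eqref{Wjk} for the time flows) is the standard derivation being relied upon. Your eigenvalue computation $1/x\mapsto 1/x-\epsilon$ under $\Lambda_{\epsilon}$ is also the right way to make the action on $z^{-1/(x\epsilon)}$ precise.
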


\section{Bi-Hamiltonian structure and tau symmetry}

To describe the integrability of the GQTH, we will construct the Bi-Hamiltonian structure and tau symmetry of the GQTH in this section.
 In this section, we will consider the GQTH on Lax operator
 \[\L=\Lambda_{\epsilon}+u+e^v\Lambda_{\epsilon}^{-1}.\]
Then for $\bar f=\int  f dx, \bar g=\int g dx, $ we can define the hamiltonian bracket as
\[\{\bar f,\bar g\}=\int  \sum_{w,w'}\frac{\delta f}{\delta w}\{w,w'\}\frac{\delta g}{\delta w'} dx,\ \ w,w'=u\ or\ v.\]
The bi-Hamiltonian structure for the
GQTH can be given by the following two compatible Poisson brackets similar as \cite{CDZ,EQTH}

\begin{eqnarray}
&&\{v(x),v(y)\}_1=\{u(x),u(y)\}_1=0,\notag\\
&&\{u(x),v(y)\}_1=\frac{1}{\epsilon} \left[e^{\epsilon\,x^2\d_x}-1
\right]\delta(x-y),\label{toda-pb1}\\
&& \{u(x),u(y)\}_2={1\over\epsilon}\left[e^{\epsilon\,x^2\d_x}
e^{v(x)}-
e^{v(x)} e^{-\epsilon x^2\d_x}\right] \delta(x-y),\notag\\
&& \{ u(x), v(y)\}_2 = {1\over \epsilon}
u(x)\left[e^{\epsilon\,x^2\d_x}-1 \right]
\delta(x-y),\label{toda-pb2}\\
&& \{ v(x), v(y)\}_2 = {1\over \epsilon} \left[
e^{\epsilon\,x^2\d_x}-e^{-\epsilon x^2\d_x}\right]\delta(x-y).\notag
\end{eqnarray}
For any difference operator $A=
\sum_k A_k \Lambda_{\epsilon}^k$, define residue $Res A=A_0$.
In the following theorem, we will prove the above Poisson structure can be as the the Bi-Hamiltonian structures of the GQTH.
\begin{theorem}\label{GQTHbiha}
The flows of the GQTH  are Hamiltonian systems
of the form
\[
\frac{\d u}{\d t_{j}}&=&\{u,H_{j}\}_1, \ \ j\ge 0.
\label{td-ham}
\]

They satisfy the following bi-Hamiltonian recursion relation
\[\notag \{\cdot,H_{n-1}\}_2=n
\{\cdot,H_{n}\}_1.
\]
Here the Hamiltonians have the form
\begin{equation}
H_{j}=\int h_{j}(u,v; u_x,v_x; \dots; \epsilon) dx,\quad  \ j\ge 1,
\end{equation}
with
\[
 h_{j}&=&\frac1{j!} Res \, \L^{j}.
\]

\end{theorem}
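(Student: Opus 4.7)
The plan is to prove Theorem \ref{GQTHbiha} in three stages, following the strategy that is by now standard for Toda-type hierarchies (cf.\ \cite{CDZ,EQTH}) and adapting each ingredient to the generalized $q$-shift $\Lambda_{\epsilon}=e^{\epsilon x^{2}\d_x}$.

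First, I would compute the variational derivatives of the Hamiltonian densities in terms of residues of powers of the Lax operator. Starting from $h_{j}=\frac{1}{j!}\,\text{Res}\,\L^{j}$ and using the general identity $\delta\,\text{Res}\,\L^{j}=j\,\text{Res}(\L^{j-1}\delta\L)$ together with $\delta\L=\delta u+e^{v}\delta v\,\Lambda_{\epsilon}^{-1}$, I expect to obtain, modulo $\Lambda_{\epsilon}$-shift equivalence under the formal integral,
\begin{equation}
\frac{\delta H_{j}}{\delta u(x)}=\frac{1}{(j-1)!}\,\text{Res}\,\L^{j-1},\qquad
\frac{\delta H_{j}}{\delta v(x)}=\frac{e^{v(x)}}{(j-1)!}\,\text{Res}\bigl(\L^{j-1}\Lambda_{\epsilon}^{-1}\bigr).
\end{equation}
The technical point here is that the residue is cyclic only up to a $\Lambda_{\epsilon}$-difference, which becomes a boundary term under $\int\cdot\,dx$ with the appropriate formal measure.

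Second, I would plug these variational derivatives into the first bracket \eqref{toda-pb1} to obtain
\begin{equation}
\epsilon\d_{t_{j}}u=\{u,H_{j}\}_{1}=\frac{1}{\epsilon}\bigl[\Lambda_{\epsilon}-1\bigr]\frac{\delta H_{j}}{\delta v},\qquad
\epsilon\d_{t_{j}}v=\{v,H_{j}\}_{1}=-\frac{1}{\epsilon}\bigl[\Lambda_{\epsilon}^{-1}-1\bigr]\frac{\delta H_{j}}{\delta u},
\end{equation}
and then compare with the evolution equations for $u$ and $V=e^{v}$ obtained by reading off the $\Lambda_{\epsilon}^{0}$ and $\Lambda_{\epsilon}^{-1}$ coefficients in the Lax equation $\epsilon\d_{t_{j}}\L=[(B_{j})_{+},\L]$ of Proposition \ref{Lax}. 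Matching the two systems reduces, after some bookkeeping, to the residue identities $\text{Res}[(B_{j})_{+},\L]_{0}=\text{Res}([(B_{j})_{-},\L])_{0}$ and their $\Lambda_{\epsilon}^{-1}$-counterparts, which are formal consequences of the splitting $\L^{j}/j!=(B_{j})_{+}+(B_{j})_{-}$.

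Third, for the bi-Hamiltonian recursion $\{\cdot,H_{n-1}\}_{2}=n\{\cdot,H_{n}\}_{1}$, by bilinearity it suffices to test the identity on the generators $u(y)$ and $v(y)$. Substituting the explicit forms of the two brackets and the variational derivatives from stage one, both sides become combinations of residues of $\L^{n-1}$ and $\L^{n-2}$ multiplied by $\Lambda_{\epsilon}^{\pm 1}$, $e^{v}$, and $u$. The identity then reduces to the algebraic statement $\L\cdot\L^{n-1}=\L^{n}$ together with the coefficient-wise expansion of $\L=\Lambda_{\epsilon}+u+e^{v}\Lambda_{\epsilon}^{-1}$; the structure mirrors \cite{CDZ,EQTH} exactly, only with $\Lambda$ replaced by $\Lambda_{\epsilon}$.

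The main obstacle will be stage one: the $q$-shift $\Lambda_{\epsilon}$ satisfies $\Lambda_{\epsilon}f(x)=f(x/(1-\epsilon x))$ rather than $f(x+\epsilon)$, so naive integration by parts $\int(\Lambda_{\epsilon}-1)f\,dx=0$ must be interpreted formally, with a change of variables producing a Jacobian $1/(1+\epsilon y)^{2}$. I expect to handle this by working consistently with densities modulo total $\Lambda_{\epsilon}$-differences at the formal level — exactly as the Poisson brackets \eqref{toda-pb1}--\eqref{toda-pb2} themselves are defined — so that the algebraic identities go through unchanged. Once stage one is secured, stages two and three are essentially parallel to the classical Toda calculation.
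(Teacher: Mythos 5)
Your proposal follows essentially the same route as the paper's proof: it computes $\frac{\delta H_j}{\delta u}$ and $\frac{\delta H_j}{\delta v}$ as (shifted) residues of $\L^{j-1}$, matches the resulting first-bracket flows against the coefficient-wise Lax equations, and derives the recursion from the identity $n\,\frac{\L^{n}}{n!}=\L\,\frac{\L^{n-1}}{(n-1)!}=\frac{\L^{n-1}}{(n-1)!}\,\L$ expanded in powers of $\Lambda_{\epsilon}$. The only difference is that you explicitly flag the formal integration-by-parts/Jacobian subtlety for $\Lambda_{\epsilon}=e^{\epsilon x^{2}\d_x}$, which the paper passes over silently; otherwise the argument is the same.
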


\begin{proof}
The proof is similar as the proof in \cite{CDZ,EQTH}.
Here we will prove that the flows $\frac{\d}{\d t_{n}}$ are also
Hamiltonian systems with respect to the first Poisson bracket.

Suppose
\[
B_{n}=\sum_{k} a_{n;k}\, \Lambda_{\epsilon}^k,
\]
and from
\begin{equation}
  \label{edef3}
\frac{\partial \L}{\partial t_{ n}} = [ (B_{n})_+ ,\L ]= [ -(B_{n})_- ,\L ],
\end{equation}
we can derive equation
\[\epsilon\frac{\partial u}{\partial t_{ n}}&=&a_{n;1}(\frac{x}{1-\epsilon x})-a_{n;1}(x),\\
\epsilon\frac{\partial v}{\partial t_{ n}}&=&a_{n;0}(\frac{x}{1+\epsilon x}) e^{v(x)}-a_{n;0}(x) e^{v(\frac{x}{1-\epsilon x})}.
\]

By the following calcuation
\begin{eqnarray}
&&d \tilde h_{n}=\frac1{n!}\,d\, Res\left[\L^{n}
\right]= \frac1{n!}\, Res\left[\L^{n}
 d \L\right]\notag\\
&&= Res\left[a_{n;0}(x)du+a_{n;1}(\frac{x}{1+\epsilon x}) e^{v(x)}dv\right],
\end{eqnarray}
it yields the following identities
\begin{equation}\label{dH1-u12}
\frac{\delta H_{n}}{\delta u}=a_{n;0}(x),\quad \frac{\delta H_{n}}
{\delta v}=a_{n;1}(\frac{x}{1+\epsilon x}) e^{v(x)}.
\end{equation}
This agree with Lax equation

\[
\frac{\d u}{\d t_{n}}&=&\{u,H_{n}\}_1={1\over \epsilon} \left[
e^{\epsilon\,x^2\d_x}-1\right]\frac{\delta H_{n}}
{\delta v}={1\over \epsilon}(a_{n;1}(\frac{x}{1-\epsilon x})-a_{n;1}(x)),\\
 \  \frac{\d v}{\d t_{n}}&=&\{v,H_{n}\}_1=\frac{1}{\epsilon} \left[1-e^{\epsilon\,x^2\d_x}
\right]\frac{\delta H_{n}}
{\delta u}=\frac{1}{\epsilon} \left[a_{n;0}(\frac{x}{1+\epsilon x}) e^{v(x)}-a_{n;0}(x) e^{v(\frac{x}{1-\epsilon x})}\right].
\]

 From the above identities we see that
the flows $\frac{\d}{\d t_{n}}$ are Hamiltonian systems
with the first Hamiltonian structure.
The recursion relation
follows from the following trivial identities
\begin{eqnarray}
&&n\, \frac{1}{n!} \L^{n} =\L\,
\frac{1}{(n-1)!}
\L^{n-1}=\frac{1}{(n-1)!}
\L^{n-1}\L.\notag
\end{eqnarray}
Then we get,
\begin{eqnarray}
&&n a_{n;1}(x)=a_{n-1;0}(\frac{x}{1-\epsilon x})+ua_{n-1;1}(x)+e^va_{n-1;2}(\frac{x}{1+\epsilon x})\notag\\
&&=a_{n-1;0}(x)+u(\frac{x}{1-\epsilon x})a_{n-1;1}(x)+e^{v(\frac{x}{1-2\epsilon x})}a_{n-1;2}(x).\notag
\end{eqnarray}
This further leads to

\begin{eqnarray}
&&\{u,H_{n-1}\}_2=\{\left[\Lambda_{\epsilon} e^{v(x)}-e^{v(x)} \Lambda_{\epsilon}^{-1}\right] a_{n-1;0}(x)+
u(x) \left[\Lambda_{\epsilon}-1\right] a_{n-1;1}(\frac{x}{1+\epsilon x}) e^{v(x)}\}\notag\\ \notag
&&
=n\left[a_{n;1}(x) e^{v(\frac{x}{1-\epsilon x})}-a_{n;1}(\frac{x}{1+\epsilon x}) e^{v(x)}\right].\label{pre-recur}
\end{eqnarray}
This is exactly the recursion relation on flows for $u$. The similar recursion flow on $v$ can be similarly derived.
Theorem is proved till now.

\end{proof}

Similarly as \cite{CDZ}, the tau symmetry of the GQTH can be proved in the  following theorem.
\begin{theorem}\label{tausymmetry}
The GQTH has the following tau-symmetry property:
\begin{equation}
\frac{\d h_{m}}{\d t_{ n}}=\frac{\d
h_{ n}}{\d t_{m}},\quad \ m,n\ge 1.
\end{equation}
\end{theorem}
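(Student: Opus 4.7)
The plan is to follow the tau-symmetry argument of \cite{CDZ}, adapted to the $q$-shift algebra generated by $\Lambda_\epsilon = e^{\epsilon x^2 \partial_x}$. The central observation I want to reach is the identity
\begin{equation*}
\epsilon\,\partial_{t_n} h_m = \mathrm{Res}\,[(B_n)_+,B_m],
\end{equation*}
together with the symmetry of its right-hand side under the exchange $m\leftrightarrow n$.

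First I would exploit the Lax equation (Proposition \ref{Lax}) and the Leibniz rule to write $\epsilon\,\partial_{t_n}\mathcal{L}^m = [(B_n)_+, \mathcal{L}^m]$. Taking residues and using $\mathcal{L}^m = m!\,B_m$ together with $h_m = \tfrac{1}{m!}\mathrm{Res}\,\mathcal{L}^m$, this gives the displayed identity above. The task then reduces to showing
\begin{equation*}
\mathrm{Res}\,[(B_n)_+,B_m] = \mathrm{Res}\,[(B_m)_+,B_n].
\end{equation*}

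Next I would decompose $B_m = (B_m)_+ + (B_m)_-$ (and similarly for $B_n$) and establish two auxiliary vanishing lemmas. Namely, for any two non-negative difference operators $P_+ = \sum_{k\geq 0}p_k(x)\Lambda_\epsilon^k$ and $Q_+ = \sum_{l\geq 0}q_l(x)\Lambda_\epsilon^l$, the $\Lambda_\epsilon^0$-component of $P_+Q_+$ is simply $p_0(x)q_0(x)$ (since it requires $k+l=0$ with $k,l\geq 0$, forcing $k=l=0$); hence $\mathrm{Res}\,[P_+,Q_+]=0$. The analogous statement $\mathrm{Res}\,[P_-,Q_-]=0$ holds for strictly negative parts by the same counting. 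Both rely only on the product rule $(X\Lambda_\epsilon^i)\circ(Y\Lambda_\epsilon^j)=X\,Y(\tfrac{x}{1-i\epsilon x})\Lambda_\epsilon^{i+j}$ given in Section 4, which preserves the index grading.

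Then I would use $[B_n,B_m]=\tfrac{1}{n!m!}[\mathcal{L}^n,\mathcal{L}^m]=0$, expand in $\pm$ parts, and apply the two lemmas to conclude
\begin{equation*}
\mathrm{Res}\,[(B_n)_+,(B_m)_-] + \mathrm{Res}\,[(B_n)_-,(B_m)_+] = 0,
\end{equation*}
i.e. $\mathrm{Res}\,[(B_n)_+,(B_m)_-] = \mathrm{Res}\,[(B_m)_+,(B_n)_-]$. Combining with $\mathrm{Res}\,[(B_n)_+,B_m] = \mathrm{Res}\,[(B_n)_+,(B_m)_-]$ (from the first lemma) and the analogous statement with $m\leftrightarrow n$ yields the desired equality, hence $\partial_{t_n}h_m = \partial_{t_m}h_n$.

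The only place requiring genuine care is verifying that the residue-of-commutator vanishing statements really hold in the present generalized $q$-shift algebra. Since multiplication is not commutative and the coefficients live in a non-trivial function algebra acted on by $\Lambda_\epsilon$, I would check explicitly that the degree counting argument survives: the key point is that $\Lambda_\epsilon^i\circ\Lambda_\epsilon^j = \Lambda_\epsilon^{i+j}$ with no lower-order corrections, so the index grading is intact and the standard pseudo-difference operator calculus applies verbatim. Once that is in hand, the remainder is purely algebraic manipulation exactly as in the ETH case.
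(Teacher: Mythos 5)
Your proposal is correct and follows essentially the same route as the paper's proof: both reduce the claim to a residue computation via the Lax equation and then exploit the vanishing of $\mathrm{Res}\,[P_+,Q_+]$ and $\mathrm{Res}\,[P_-,Q_-]$ to swap $(B_n)_\pm$ against $(B_m)_\mp$ (the paper starts from the $-(\mathcal{L}^n)_-$ form of the flow and leaves these lemmas implicit, while you state and verify them for the $\Lambda_\epsilon$-graded algebra). The extra care you take in checking that the grading $\Lambda_\epsilon^i\circ\Lambda_\epsilon^j=\Lambda_\epsilon^{i+j}$ has no lower-order corrections is a worthwhile addition, not a deviation.
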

\begin{proof} Let us prove the theorem in a direct way
\[
&&\frac{\d h_{m}}{\d t_{n}} =\frac1{m!\,n!}\, Res[-(\L^{n})_-, \L^m]\notag\\
&&=\frac1{m!\,n!}\, Res[(\L^m )_+,(\L^{n})_-]\notag\\
&& =\frac1{m!\,n!}\, Res[(\L^m )_+,\L^{n}]=\frac{\d h_{n}}{\d t_{m}}.
\]
Theorem is proved.
\end{proof}

 This property justifies the definition of the
tau function for the GQTH as in the following proposition.

\begin{proposition} The $tau$ function of the GQTH can also be defined by
the following expressions in terms of the densities of the Hamiltonians:
\begin{equation}
h_{n}=\epsilon (\Lambda_{\epsilon}-1)\frac{\d\log  \tau}{\d t_{n}},
\quad \ n\ge 0.
\end{equation}
\end{proposition}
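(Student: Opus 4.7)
The plan is to leverage the tau-symmetry identity $\partial_{t_m} h_n = \partial_{t_n} h_m$ of Theorem \ref{tausymmetry} to show that the proposed system of defining equations $h_n = \epsilon(\Lambda_{\epsilon}-1)\partial_{t_n}\log\tau$, $n\geq 0$, is consistent and therefore determines $\log\tau$ uniquely up to additive integration constants. Conceptually, the existence of $\tau$ is equivalent to the closedness of a certain 1-form in the infinite-dimensional time space, and tau-symmetry is precisely the cocycle condition that makes this form closed.

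First I would introduce, for each $n\geq 0$, an auxiliary quantity $\Omega_n(x,\t)$ satisfying $\epsilon(\Lambda_{\epsilon}-1)\Omega_n = h_n$. Such primitives can be constructed from the asymptotic expansion of the wave function in \eqref{baker-asymp}, or equivalently from coefficients of the dressing operator $W$; acting on formal series expansions of the type appearing in $W$, the operator $\Lambda_\epsilon - 1$ can be inverted term by term, giving a natural family of candidate $\Omega_n$.

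Next I would verify that $\partial_{t_m}\Omega_n = \partial_{t_n}\Omega_m$ for all $m,n$. Applying $\epsilon(\Lambda_{\epsilon}-1)$ to both sides and using $\epsilon(\Lambda_{\epsilon}-1)\Omega_k = h_k$ reduces this compatibility to $\partial_{t_m} h_n = \partial_{t_n} h_m$, which is exactly the tau-symmetry statement of Theorem \ref{tausymmetry}. Hence the 1-form $\sum_n \Omega_n\, dt_n$ is closed on the space of times $(t_0,t_1,\dots)$, and the Poincar\'e lemma yields a scalar potential which we declare to be $\log\tau$, so that $\partial_{t_n}\log\tau = \Omega_n$ and the desired identity $h_n = \epsilon(\Lambda_{\epsilon}-1)\partial_{t_n}\log\tau$ follows immediately.

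The main obstacle will be controlling the kernel of $\epsilon(\Lambda_\epsilon - 1)$: the primitives $\Omega_n$ are determined by $h_n$ only modulo functions annihilated by $\Lambda_\epsilon - 1$ (in particular, constants in $x$ and anything in the shift-invariant subspace), so one must choose them consistently across all $n$ before the compatibility calculation above can be invoked honestly. The cleanest resolution is to fix the $\Omega_n$ intrinsically through the wave operators $W,\bar W$ defined in \eqref{def:baker}, where the Sato equations \eqref{Wjk} guarantee that primitives associated to different values of $n$ are automatically compatible, removing the kernel ambiguity and completing the construction of $\tau$.
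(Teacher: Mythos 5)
Your argument is the standard one and matches the paper's (implicit) reasoning: the paper gives no proof of this proposition at all, merely asserting that the tau-symmetry of Theorem \ref{tausymmetry} ``justifies'' the definition, and your write-up is precisely the fleshed-out version of that claim --- closedness of the form $\sum_n \Omega_n\,dt_n$ via $\partial_{t_m}h_n=\partial_{t_n}h_m$, then a potential $\log\tau$. You also correctly flag the only genuine subtlety (the kernel of $\Lambda_\epsilon-1$, which forces a consistent choice of the primitives $\Omega_n$ via the dressing operators), a point the paper does not address.
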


\section*{Acknowledgments}
Chuanzhong Li is supported by the National Natural Science Foundation of China under Grant No. 11201251,
  Zhejiang Provincial Natural Science Foundation of China under Grant No. LY12A01007, the Natural Science Foundation of Ningbo under Grant No. 2013A610105 and K.C.Wong Magna Fund in Ningbo University.


\begin{thebibliography}{00}
\bibitem{C}
G. Carlet, The extended bigraded Toda hierarchy, \textit{J. Phys. A} \textbf{39}
(2006) 9411--9435.
\bibitem{CDZ}
G. Carlet, B. Dubrovin, Y. Zhang,  The Extended Toda Hierarchy,
\textit{Moscow Mathematical Journal}  \textbf{4 }(2004) 313--332,.
\bibitem{dubrovin} B. A.
Dubrovin, Geometry of 2D topological field theories, in
\emph{Integrable systems and quantum groups} (Montecatini Terme,
1993) 120--348, \textit{Lecture Notes in Math.} \textbf{1620} (Springer, Berlin,
1996).
\bibitem{he}J. S. He, Y. H. Li, Y. Cheng, $q$-deformed KP hierarchy and $q$-deformed constrained KP hierarchy. \textit{SIGMA} \textbf{2}(2006) 060.
\bibitem{iliev2} P. Iliev, Tau function solutions to a $q$-deformation of the KP hierarchy,
\textit{Lett. Math. Phys.} \textbf{44}(1998), 187-200.

\bibitem{Jackson} H. F. Jackson, $q$-Difference equations, \textit{Am. J. Math.} \textbf{32}(1910) 305--314.
\bibitem{kac} V. G. Kac and J. W. van de Leur, The n-component KP hierarchy and representation theory, \textit{J. Math. Phys.} \textbf{44} (2003) 3245.
\bibitem{EQTH} C. Z. Li, Sato theory on the $q$-Toda hierarchy and its extension, submitted.
\bibitem{solutionBTH} C. Z. Li, Solutions of  bigraded Toda hierarchy,
\textit{Journal of Physics A} \textbf{44}(2011) 255201.
\bibitem{EZTH} C. Z. Li, J. S. He, On the extended $Z_N$-Toda hierarchy, arXiv:1403.0684.
\bibitem{dispBTH}
 C. Z. Li, J. S. He, Dispersionless bigraded Toda hierarchy and its additional symmetry, \textit{Reviews in Mathematical Physics} \textbf{24}(2012) 1230003.
\bibitem{ourBlock}
 C. Z. Li, J. S. He, Y. C. Su, Block type symmetry of bigraded Toda hierarchy,
\textit{J. Math. Phys.} \textbf{53}(2012) 013517.
\bibitem{ourJMP}
 C. Z. Li, J. S. He, K. Wu, Y. Cheng,  Tau function and  Hirota bilinear equations for the extended  bigraded Toda
 Hierarchy, \textit{J. Math. Phys.} \textbf{51}(2010) 043514.
\bibitem{myqckp} C. Z. Li, T. T. Li, Virasoro  symmetry of the $(r,m)$-component q-constrained KP hierarchy, submitted.
\bibitem{linruiliang}R. Lin, X. Liu, Y. Zeng,  A new extended q-deformed KP hierarchy, Journal of Nonlinear Mathematical Physics, 15(2008) 333--347.
\bibitem{mas}J. Mas, M. Seco, The algebra of $q$-pseudodif\/ferential symbols and the $q$-$W_{\rm KP}^{(n)}$ algebra,
\textit{J. Math. Phys.} \textbf{37}(1996) 6510--6529.
\bibitem{Silindir}B. Silindir, Soliton solutions of $q$-Toda lattice by Hirota
direct method, \textit{Advances in Difference Equations} (2012) 2012:121.
\bibitem{hetianqkp}K. L. Tian, J. S. He, Y. C. Su, Y. Cheng, String equations of the $q$-KP
hierarchy, \textit{Chinese Annals of Mathematics, Series B} \textbf{32}(2011) 895--904.
\bibitem{Toda}
M. Toda, Vibration of a chain with nonlinear interaction, \textit{J. Phys. Soc.
Jpn.} \textbf{22}(1967) 431--436.
\bibitem{Todabook}
M. Toda, Nonlinear waves and solitons(Kluwer Academic Publishers,
Dordrecht, Holland, 1989).
\bibitem{ZTAK}
Z. Tsuboi, A. Kuniba, Solutions of a discretized Toda field equation
for $D_r$ from analytic Bethe ansatz, \textit{J. Phys. A} \textbf{29}(1996) 7785--7796.
\bibitem{tu}M. H. Tu, $q$-deformed KP hierarchy: its additional
symmetries and inf\/initesimal B\"acklund transformations, \textit{Lett.
Math. Phys.} \textbf{49}(1999) 95--103.
\bibitem{UT}K. Ueno, K. Takasaki, Toda lattice hierarchy,
In \emph{``Group representations and systems of differential
equations'' (Tokyo, 1982)} 1--95, \textit{Adv. Stud. Pure Math.} 4,
(North-Holland, Amsterdam, 1984).
\bibitem{witten}
E. Witten, Two-dimensional gravity and intersection theory on moduli
space, \textit{Surveys in differential geometry} \textbf{1}(1991) 243--310.






\end{thebibliography}
\end{document}